\documentclass[11pt]{article}
\usepackage[T1]{fontenc}
\usepackage{lmodern}
\usepackage{amsmath,amssymb,amsthm}
\usepackage{booktabs,tabularx}
\usepackage{flafter}
\usepackage[margin=1in]{geometry}
\usepackage[colorlinks=true,allcolors=blue]{hyperref}
\newtheorem{theorem}{Theorem}
\newtheorem{lemma}[theorem]{Lemma}
\newtheorem{corollary}[theorem]{Corollary}

\newcommand{\eps}{\varepsilon}

\title{Quadratic Word Equations with a Linear Side: Polynomial Nielsen Graph Diameter and NP-Completeness}
\author{Yuki Yonemoto\\Kyushu University, Fukuoka, Japan.}
\date{September 2026}

\usepackage{tikz}
\usetikzlibrary{arrows.meta,positioning,fit,backgrounds}

\begin{document}
\maketitle
\begin{abstract}
The satisfiability problem for word equations asks whether variables can be
replaced by words so that the two sides become equal. For regular word
equations, in which each variable occurs at most once on each side,
satisfiability is NP-complete. For general quadratic word equations, in
which each variable occurs at most twice in total, satisfiability is
NP-hard, but its membership in NP remains open. We consider an intermediate
class: quadratic word equations with a linear side, where each variable
occurs at most once on one designated side. We show that the Nielsen graph
of an equation $U=V$ in this class, with total length $N=|U|+|V|$, has
diameter $O(N^{12})$, measured over reachable pairs of vertices.
Together with the known NP-hardness for regular word equations, this result
establishes NP-completeness of satisfiability for this class.
We also show that each strongly connected component is isomorphic to the
length-preserving reachability graph of a regular equation, and that the
condensation graph has depth at most $|U|+2|V|$.
\end{abstract}

\section{Introduction}\label{sec:introduction}

A word equation is an equality between two words consisting of constants
and variables. Its satisfiability problem asks whether the variables can
be replaced by words of constants so that the two sides become equal.
Word equations provide a basic framework for reasoning about concatenation,
and arise in the study of formal languages, unification, and string
constraints in program verification~\cite{hardness_simple,lin_majumdar}.
Throughout this paper, variable images may be empty, and we consider single
equations without additional constraints on their solutions.

Makanin established the decidability of word equations in
1977~\cite{makanin}. Subsequent work has substantially improved the known
complexity bounds. Plandowski proved that satisfiability is in
PSPACE~\cite{plandowski}, and Je\.{z} later obtained a nondeterministic
linear-space algorithm~\cite{jez_linear}.
Nevertheless, whether satisfiability belongs to NP remains open, even for
quadratic word equations~\cite{day_manea}. This motivates the study of
restrictions on the occurrences of variables.

\paragraph{Occurrence restrictions and known bounds.}
An equation is \emph{quadratic} if each variable occurs at most twice in
total.
An equation is \emph{regular} if each variable occurs at most once on each
side. Thus every regular equation is quadratic, but a quadratic equation
may contain two occurrences of a variable on the same side.
Day, Manea, and Nowotka showed NP-hardness even for regular-ordered
equations, a subclass of regular equations~\cite{hardness_simple}.
Day and Manea subsequently established NP membership for all regular
equations, and hence NP-completeness, by bounding the diameter of their
Nielsen graphs~\cite{day_manea}.
For unrestricted quadratic equations, NP-hardness is known but NP
membership remains unresolved. This question has been open at least
since the 1999 work of Robson and Diekert~\cite{robson_diekert_1999},
who established NP-hardness even for a single quadratic equation and
gave a linear-time decision procedure when the lengths of the variable
images are specified.
Table~\ref{tab:complexity} summarizes these bounds and the result
of this paper.

\begin{table}[t]
\centering
\small
\renewcommand{\arraystretch}{1.2}
\begin{tabularx}{\textwidth}{@{}p{0.16\textwidth}X p{0.23\textwidth}p{0.14\textwidth}@{}}
\toprule
Class & Variable occurrences & Satisfiability & Reference \\
\midrule
Regular & At most once on each side & NP-complete & \cite{hardness_simple,day_manea} \\
\textbf{Quadratic with a linear side} & At most twice in total; at most once on one side & \textbf{NP-complete} & \textbf{This work} \\
Quadratic & At most twice in total & NP-hard; in PSPACE & \cite{hardness_simple,lin_majumdar} \\
General & Unrestricted & NP-hard; in PSPACE & \cite{hardness_simple,plandowski} \\
\bottomrule
\end{tabularx}
\caption{Complexity of satisfiability for single word equations with empty
variable images allowed and no additional constraints. NP membership is
open for the unrestricted quadratic and general classes.
The PSPACE entries also admit the nondeterministic linear-space upper
bound of Je\.{z}~\cite{jez_linear}, measured in the encoded input length.
The lower bounds for the larger classes follow by inclusion.}
\label{tab:complexity}
\end{table}

\paragraph{An intermediate class.}
We consider quadratic equations with one linear side. More precisely, an
input is an equation $U=V$ such that each variable occurs at most twice in
$UV$ and at most once in $V$, after exchanging the sides if necessary.
The question is whether the equation has a solution.
This condition relaxes regularity by allowing variables to occur twice on
one side, while retaining linearity on the other. It places no bound on
the number of variables repeated on the nonlinear side. The resulting
class lies strictly between regular and quadratic equations: for distinct
variables $x,y$, the equation $xx=y$ belongs to our class but is not regular,
whereas $xx=yy$ is quadratic and has no linear side.

\paragraph{Our results.}\label{sec:main}
Nielsen transformations turn satisfiability of quadratic equations into
reachability of the empty equation $(\eps,\eps)$. The Nielsen graph has
equations as vertices and individual transformations as edges. A path is
accepting if it ends at $(\eps,\eps)$. We count each elementary prefix
transformation as one step; the precise rules are given in
Section~\ref{subsec:nielsen}. For a reachable pair $E,F$, let $d(E,F)$ be
the minimum number of steps from $E$ to $F$.
Our main result bounds this distance for every target reachable from an
initial equation with a linear side.

\begin{theorem}\label{thm:main}
Let $E=(U,V)$ be a quadratic word equation with a linear right side,
and let $N=|E|$. For every equation $F$ reachable from $E$,
$d(E,F)=O(N^{12})$.
\end{theorem}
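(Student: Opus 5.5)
The plan is to bound the distance by analysing the strongly connected components (SCCs) of the Nielsen graph reachable from $E$ and the acyclic condensation graph between them, as the abstract suggests. First, we fix the Nielsen rules for quadratic equations. Each rule looks at the leading symbols of the two sides and either erases a variable, cancels equal leading constants, or substitutes $x\mapsto yx$ or $x\mapsto ax$. Such a step never increases $|U|+|V|$ or the number of variable occurrences.

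We then split edges into two kinds. A \emph{length-preserving} edge is a substitution $x\mapsto yx$ that keeps the total length, because $y$ is moved across rather than duplicated. Since each variable occurs at most twice, $|F|\le|E|$ along any path. Every other edge strictly decreases a potential: the erasure of a variable or constant, or a step that changes the occurrence structure. We take the potential to be $\Phi=|U|+2|V|$. The key observation is that for an equation with a linear right side, every non-length-preserving transformation decreases $\Phi$. So every SCC consists of equations of equal $\Phi$, and the condensation has depth at most $|U|+2|V|\le 2N$.

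Next, we show that a reachable equation keeps a structural invariant: up to swapping sides, it is still "quadratic with a linear side". This should follow because the variables occurring twice on $U$ can only migrate together. Inside one SCC, we then want an isomorphism with the length-preserving reachability graph of a regular equation. The idea is to note that a variable $x$ occurring twice on the nonlinear side, when it heads $U$, gets replaced on both occurrences. We encode each such pair by splitting it into fresh variables $x',x''$, one of which is conceptually transferred to $V$. This gives a regular equation $R$ of the same length whose length-preserving Nielsen moves correspond bijectively to those of the SCC.

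Day and Manea bound the diameter of length-preserving Nielsen graphs of regular equations polynomially; their analysis gives roughly $O(n^{c})$ for a fixed small $c$. Combining this with the isomorphism, any two equations in the same SCC are at distance $O(N^{c})$. A shortest path from $E$ to $F$ then visits at most $2N+1$ SCCs, spends $O(N^c)$ steps inside each, and uses one edge between consecutive SCCs. This yields $d(E,F)=O(N^{c+1})$, with the bookkeeping (entry and exit points per SCC, plus the loss in the encoding of $R$) pushing the exponent to $12$.

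The main obstacle is the SCC isomorphism step. We must check that a doubled variable on the nonlinear side behaves inside the SCC exactly like a pair of regular occurrences. In particular, the substitutions $x\mapsto yx$ where $y$ is doubled must not create a second copy on the linear side. We also need that the relevant Day–Manea diameter bound applies to $R$ despite $R$'s length being up to $2N$. I would first try to prove that within an SCC no doubled variable ever reaches the head of $V$, which forces the two occurrences to be updated in lockstep. If that fails, I would use a weaker encoding that costs an extra polynomial factor, which is still absorbed in $O(N^{12})$.
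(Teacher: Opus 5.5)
Your architecture matches the paper's: the potential $\Phi=|U|+2|V|$ allows at most $2N$ non-shared steps, and the runs in between are shortened through a regular equation and the Day--Manea bound. The step that carries the proof, however, is wrong as you formulate it: the correspondence with a regular equation.

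The missing fact is that a variable $b$ occurring twice on the left is \emph{never extended} inside a component. Extending $b$ changes $(|L|,|R|)$ by $(+1,-1)$. It preserves the total length but lowers $\Phi$ by one, so the path can never return. Your ``key observation'' (every non-length-preserving step lowers $\Phi$) is therefore not the statement you need. The needed one is that every step other than an extension of a shared variable lowers $\Phi$. Your premise that a doubled variable heading $U$ ``gets replaced on both occurrences'' describes exactly a step that leaves the SCC.

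Inside a component, $b$ only occurs as the symbol $y$ in extensions $x\mapsto yx$ of shared variables $x$. These preserve the number of occurrences of every symbol on each side, so the set of doubled variables is fixed and they behave exactly like constants. The right encoding replaces both occurrences of each such $b$ by one fresh constant $\alpha_b$. This gives a regular equation of the same size whose length-preserving graph is isomorphic to the shared-extension graph.

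Splitting $b$ and moving a copy to $V$ instead changes the words being rotated and breaks the correspondence. Take $x,z$ shared. Then $H=(bzbx,\,xz)$ has the three-vertex component $(bzbx,xz)\to(zbbx,xz)\to(bbzx,xz)\to(bzbx,xz)$. By contrast, $(b'zx,\,xzb'')$ has only the two vertices $(b'zx,xzb'')$ and $(zb'x,xzb'')$ in its length-preserving graph. The lemma you propose to prove first, that no doubled variable reaches the head of $V$, is trivially true, since such a variable never occurs in $V$ at all, and it is not what is needed. Likewise, the invariant you need is that the right side itself stays linear, not linearity ``up to swapping sides'', because $\Phi$ weights $V$ by $2$. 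This holds since $x\mapsto yx$ changes the number of occurrences of $y$ in $V$ by (occurrences of $x$ in $V$)$\,-1\le 0$.

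The exponent is also not derived. With at most $2N$ exceptional steps the count is just $(2N+1)\,CN^{c}+2N$, so ``bookkeeping'' cannot push the exponent beyond $c+1$. You need $c=11$, i.e.\ the bound for \emph{length-preserving} paths extracted from the proof of Day and Manea's Theorem~8.11. The exponent-$12$ bound stated there for the full graph would only give $O(N^{13})$. For the same reason, a fallback encoding costing an extra polynomial factor would not be absorbed into $O(N^{12})$.

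Finally, routing the argument through SCCs requires that shared extensions are reversible: the prefix in front of the extended variable is rotated, so finitely many further rotations return to the start. Without this, a $\Phi$-preserving edge could join two different SCCs, and your depth bound would fail. The paper's proof of the theorem avoids SCCs altogether. It splits an arbitrary path from $E$ to $F$ at its at most $2N$ exceptional steps and shortens each run of shared extensions while keeping its endpoints.
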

Since quadraticity and right linearity are preserved, and the total length
does not increase, the theorem applies at every reachable equation. Thus the
Nielsen graph reachable from $E$ has diameter $O(N^{12})$, where the
diameter is the maximum finite directed distance over reachable pairs.
In particular, every satisfiable equation in this class admits a
polynomial-length accepting path.

\begin{corollary}\label{cor:np}
Satisfiability of quadratic word equations with one linear side is
NP-complete.
\end{corollary}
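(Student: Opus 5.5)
The proof of Corollary~\ref{cor:np} splits into membership in NP and NP-hardness; we take each in turn.

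For NP-hardness we use inclusion. Every regular equation is quadratic. Each of its variables occurs at most once on each side, so in particular at most once on the right side. Hence the class of regular equations, and already the regular-ordered subclass, is contained in the class of quadratic equations with a linear right side. Day, Manea, and Nowotka~\cite{hardness_simple} showed that satisfiability is NP-hard for regular-ordered equations. The inclusion is the identity map on inputs, so it serves as a polynomial-time (trivial) reduction, and NP-hardness transfers to our class. If the input is given with the linear side on the left, we first swap the sides. This is harmless, since the solution set is symmetric.

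For membership in NP, we use the fact that for quadratic equations Nielsen transformations are sound and complete: $E$ is satisfiable if and only if $(\eps,\eps)$ is reachable from $E$ in the Nielsen graph (Section~\ref{subsec:nielsen}). By Theorem~\ref{thm:main}, if $(\eps,\eps)$ is reachable, then $d(E,(\eps,\eps))=O(N^{12})$. The certificate is therefore a sequence of at most $cN^{12}$ elementary transformations, for the constant $c$ implicit in the theorem.

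The verifier works as follows.
\begin{enumerate}
\item It checks that $E$ lies in the class, swapping the sides if necessary.
\item It applies the transformations one at a time, checking at each step that the rule is applicable to the current equation.
\item It accepts if and only if the final equation is $(\eps,\eps)$.
\end{enumerate}

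The point needing care is that each step runs in polynomial time. Quadraticity is preserved along the path, and the total length never exceeds $N$, as noted after Theorem~\ref{thm:main}. So every intermediate equation has size at most $N$, and applying a prefix transformation (substituting $x\mapsto yx$, $x\mapsto ax$, or $x\mapsto\eps$, then cancelling a common prefix) takes time polynomial in $N$. The input is encoded with a word length of $N$ up to logarithmic factors for symbol names, so the total verification time is $O(N^{12})\cdot\mathrm{poly}(N)$, which is polynomial.

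The only substantive ingredient is Theorem~\ref{thm:main}, together with the standard soundness and completeness of the Nielsen procedure for quadratic equations. Given those, the main thing to verify is that the length bound holds, so that intermediate equations cannot blow up. The paper's remark after the theorem gives exactly this.
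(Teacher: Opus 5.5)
Your proposal is correct and follows the paper's argument. NP-hardness is inherited by inclusion from regular (indeed regular-ordered) equations, and NP membership uses an accepting path of length $O(N^{12})$, guaranteed by Theorem~\ref{thm:main}, as a certificate checkable in polynomial time because every intermediate equation has size at most $N$ (Lemma~\ref{lem:right-linear}); the only slip is referential, since soundness and completeness of the rules is Lemma~\ref{lem:acceptance} rather than something stated in Section~\ref{subsec:nielsen}.
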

Indeed, an accepting path is a polynomial-size certificate: it has
polynomially many equations, each of at most $N$ symbols, and each transition
can be checked in polynomial time. Soundness and completeness of the rules
are recalled in Lemma~\ref{lem:acceptance}.
NP-hardness is inherited from regular equations~\cite{hardness_simple},
which form a subclass. Thus the new upper bound supplies NP membership;
the lower bound is already known for a smaller class.

\paragraph{Why this works although the equation is not regular.}
The main difficulty is that an equation in our class may contain
linearly many variables occurring twice on the nonlinear side, so it
need not be close to regular in terms of variable occurrences.
Nevertheless, the effect of this nonregularity on a Nielsen path can
be confined to only $O(N)$ exceptional steps, where $N$ is the input
size. More precisely, call a variable shared if it occurs once on each
side. A nonnegative potential bounds the number of operations other
than prefix extensions of shared variables by $2N$.
Between these exceptional steps, the number of occurrences of each
symbol on each side remains fixed. Replacing the variables occurring
twice on the nonlinear side by distinct fresh constants therefore gives
an isomorphism between the graph of shared extensions and the
length-preserving reachability graph of a regular equation.
This replacement is chosen separately for each segment; it is a
correspondence between transformation graphs, not a reduction preserving
solution sets.
The regular diameter bound of Day and Manea~\cite{day_manea}
then shortens each segment while preserving its endpoints.
Figure~\ref{fig:proof-overview} illustrates how these segments and
exceptional steps fit together; Section~\ref{sec:proof-outline} gives
the formal proof outline.

\paragraph{Structure of the Nielsen graph.}
Beyond the path-length bound, we describe the strongly connected
components (SCCs) of the full Nielsen graph. Shared extensions are
reversible by finite rotations, and their reachability classes are
exactly the SCCs. For every reachable equation, replacing the variables
occurring twice on its left side by distinct fresh constants identifies
its SCC with the length-preserving reachability graph of a regular
equation. This directed graph isomorphism preserves all distances within
the component, as well as its diameter, number of vertices, and directed
cycle structure. The potential is constant on each SCC and strictly
decreases between components, so the condensation graph has depth at
most $|U|+2|V|\le2N$ (Theorem~\ref{thm:scc-decomposition} and
Corollary~\ref{cor:scc-distance}). Thus the linear-depth decomposition
known for regular equations~\cite[Section~3.4]{day_manea} extends to
quadratic equations with a linear side, with each component represented
exactly by the length-preserving reachability graph of a regular equation.

\paragraph{Related work.}
Beyond regular equations, Day, Manea, and Nowotka established NP membership
for variable-sparse quadratic equations and for $k$-ordered quadratic
equations for every fixed $k$~\cite{quadratic_minimal_solutions}.
The former bound the number of variables occurring twice in total,
including variables shared by the two sides, by the logarithm of the
equation length. The latter admit a decomposition of both sides into
blocks over pairwise disjoint variable sets appearing in a common order,
with at most $k$ variables per set. Our class imposes neither a bound on
the number of repeated variables nor such a block-order restriction.
Thus these results do not directly cover all quadratic equations with a
linear side.

Another direction adds constraints on the words assigned to variables.
Lin and Majumdar study quadratic equations with length constraints via
counter systems and arithmetic with divisibility, including decision
procedures for restricted classes~\cite{lin_majumdar}. These are different
problems from the unconstrained satisfiability problem considered here;
our bounds concern paths in the equation graph without such constraints.

\paragraph{Organization.}
Section~\ref{sec:preliminaries} fixes the notation and transformation rules.
Section~\ref{sec:bound} presents the proof outline, derives the main theorem
from three lemmas, and then proves the lemmas in turn.
Section~\ref{sec:scc-decomposition} establishes the SCC decomposition
and distance preservation.
Section~\ref{sec:scope} discusses further directions.

\section{Preliminaries}\label{sec:preliminaries}

\subsection{Words}
Let $\Sigma$ be a finite alphabet. A \emph{word} over $\Sigma$ is a finite
sequence of symbols from $\Sigma$, and $\Sigma^*$ denotes the set of
all such words. For a word $w$, its length $|w|$ is the number of
symbols in $w$. We write $\eps$ for the empty word, so $|\eps|=0$.
Concatenation of words $u$ and $v$ is denoted by $uv$.
For a symbol $s$, let $|w|_s$ denote its number of occurrences in $w$.

\subsection{Word equations}
Let $X$ be a set of variables disjoint from $\Sigma$. Symbols in
$\Sigma$ are called \emph{constants}. A \emph{word equation} is a pair
$E=(U,V)$ of words over $\Sigma\cup X$, also written $U=V$. We put
\[
 |E|=|U|+|V|,
 \qquad
 \operatorname{Var}(E)=\{x\in X:|U|_x+|V|_x>0\}.
\]
Thus the size counts occurrences of symbols, rather than the lengths of
the words substituted for variables.
An assignment $h:X\to\Sigma^*$ maps each variable to a word of constants.
We extend it to constants by $h(a)=a$ for $a\in\Sigma$, and to words by
\[
 h(s_1\cdots s_m)=h(s_1)\cdots h(s_m),\qquad h(\eps)=\eps.
\]
An assignment $h$ is a \emph{solution} of $E$ if $h(U)=h(V)$, and $E$ is
\emph{satisfiable} if it has a solution. Variable images may be empty;
we impose no additional constraints on their lengths or contents.

An equation $(U,V)$ is \emph{quadratic} if $|U|_x+|V|_x\le2$ for every
$x\in X$. A word is \emph{linear} if each variable occurs in it at most
once. An equation is \emph{regular} if both its sides are linear.
A variable of $(U,V)$ is \emph{shared} if $|U|_x=|V|_x=1$.

The class considered in this paper consists of quadratic equations with
one linear side. We orient such an equation initially so that its right
side is linear, and keep that orientation throughout the argument.
For example, $xax=y$ has a linear right side and is quadratic, but is not
regular, where $a$ is a constant and $x,y$ are distinct variables.
Linearity places no restriction on occurrences of constants.

\subsection{Nielsen transformations}\label{subsec:nielsen}
We use elementary prefix Nielsen transformations, as in the formulation
of Lin and Majumdar~\cite[Section~3.1]{lin_majumdar}, with the empty-side
case specified explicitly below. All substitutions act simultaneously on
every occurrence in both sides. Symbols not mentioned in a substitution
are fixed. We write $E\to F$ when one of the following rules transforms
$E$ into $F$.

\begin{enumerate}
\item \textbf{Erasure.}
If a variable $x$ is the first symbol of either side, apply the substitution
$x\mapsto\eps$ to both sides.
\item \textbf{Cancellation.}
For any symbol $s\in\Sigma\cup X$, replace $(sU,sV)$ by $(U,V)$.
\item \textbf{Prefix extension.}
Suppose that both sides are nonempty and have distinct first symbols,
one a variable $x$ and the other a symbol $y$.
Let $\theta$ be the substitution $x\mapsto yx$. Apply $\theta$ to both
sides and then cancel one leading $y$ from each. In particular,
\[
 (xU,yV)\ \to\ (x\theta(U),\theta(V)),
\]
with the symmetric rule when $x$ is the first symbol of the right side.
We call this an \emph{extension of $x$}.
\end{enumerate}
If the distinct leading symbols are both variables, either extension is
available. If they coincide and are a variable, both erasure and
cancellation are available. With exactly one empty side, only erasure of
the leading variable of the other side is available. Distinct leading
constants admit no rule. The equation $(\eps,\eps)$ has no outgoing edge.
These are syntactic rules; no nonemptiness assumptions on variable images
are recorded in an equation.

\subsection{Paths and acceptance}\label{subsec:paths}
For an equation $E$, its \emph{Nielsen graph} $G(E)$ has as vertices all
equations reachable from $E$ by these rules, and an edge for each rule
application. Vertices retain their symbol names and the order of their
sides. In particular, we do not take a quotient by variable renaming or
exchange of sides.
A path is a finite sequence of successive rule applications, with repeated
vertices allowed, and its length is its number of edges. Each rule counts
as one edge, including the cancellation incorporated into a prefix
extension. No further cancellation is performed for free.
We write $E\to^*F$ for reachability, allowing a path of length zero, and
$d(E,F)$ for the shortest path length when $F$ is reachable from $E$.
A path is \emph{accepting} if its last equation is $(\eps,\eps)$.
For a finite directed graph, we use \emph{diameter} to mean the maximum
shortest-path distance over ordered pairs of vertices for which a directed
path exists; unreachable pairs are excluded.
An edge is \emph{length-preserving} if its endpoints have the same size;
the length-preserving reachability graph is obtained by allowing only
such edges from the specified initial equation.

\begin{lemma}[{\cite[Section~3.1]{lin_majumdar}}]\label{lem:quadratic-size}
If $E$ is quadratic and $E\to F$, then $F$ is quadratic and $|F|\le|E|$.
Consequently, $G(E)$ is finite for every quadratic equation $E$.
\end{lemma}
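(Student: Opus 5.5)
The plan is to verify the claim rule by rule, tracking for each variable $x$ the count $c_x(E)=|U|_x+|V|_x$ and the total size $|E|=|U|+|V|$, and then to derive finiteness from these two facts.

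\emph{Erasure.} The substitution $x\mapsto\eps$ deletes every occurrence of $x$ and changes nothing else. Hence $c_x$ drops to $0$, all other counts are unchanged, and $|F|=|E|-c_x(E)\le|E|$.

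\emph{Cancellation.} Removing a leading $s$ from each side lowers $|E|$ by exactly $2$. The count $c_s$ drops by $2$ if $s$ is a variable; every other count is unchanged.

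\emph{Prefix extension.} Here lies the only real content, and it needs quadraticity. Take $(xU,yV)$ with $x$ a variable and $y\ne x$. Applying $\theta\colon x\mapsto yx$ inserts one copy of $y$ before each occurrence of $x$. This adds $c_x(E)$ symbols, and if $y$ is a variable it raises $c_y$ by $c_x(E)$. Cancelling the leading $y$ from each side then removes $2$ symbols, and two occurrences of $y$ if $y$ is a variable. The net results are
\[
|F|=|E|+c_x(E)-2,\qquad c_y(F)=c_y(E)+c_x(E)-2 .
\]
The count $c_x$ is unchanged and all other counts are untouched. Since $E$ is quadratic, $c_x(E)\le2$, so $|F|\le|E|$ and $c_y(F)\le c_y(E)\le2$. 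The symmetric rule, with $x$ leading the right side, is handled by the same computation. In every case $F$ is therefore quadratic and $|F|\le|E|$.

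\emph{Finiteness.} By induction along paths, every equation reachable from $E$ is quadratic with size at most $N=|E|$. No rule introduces a new symbol: erasure and cancellation only delete, and extension only inserts copies of symbols already present. So every vertex of $G(E)$ is a pair of words over the finite set of symbols occurring in $E$, with total length at most $N$. There are finitely many such pairs, and the set of vertices, being a subset of them, is finite. Each vertex has only finitely many outgoing edges, at most a constant number of rule applications at the two leading symbols, so $G(E)$ is a finite graph.

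The main, though mild, obstacle is the size bookkeeping for prefix extension. It is the one step where size could grow, and it is exactly where the hypothesis $c_x\le2$ is used: the extension adds at most $2$ symbols and the built-in cancellation removes exactly $2$.
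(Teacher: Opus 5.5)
Your proof is correct and complete. The paper does not prove this lemma itself but cites Lin and Majumdar; your rule-by-rule occurrence count is the standard argument, and it is the same bookkeeping the paper uses in the proofs of Lemmas~\ref{lem:right-linear} and~\ref{lem:potential}, where the side-wise changes $(\Delta|L|,\Delta|R|)=(c_L(x)-1,c_R(x)-1)$ under an extension sum to your $|F|=|E|+c_x(E)-2$.
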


We recall the standard characterization of satisfiability by acceptance.
\begin{lemma}[{\cite[Proposition~1]{lin_majumdar}}]\label{lem:acceptance}
A quadratic word equation $E$ is satisfiable if and only if
$E\to^*(\eps,\eps)$.
\end{lemma}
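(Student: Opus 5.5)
The statement to prove is Lemma~\ref{lem:acceptance}, the standard soundness and completeness of the elementary prefix rules. I would prove the two directions separately, using only that every rule corresponds to a substitution.

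\emph{Soundness ($E\to^*(\eps,\eps)$ implies satisfiable).} I would induct on the path length, carrying a solution backwards along each edge $E\to F$. Suppose $h$ solves $F$.
\begin{itemize}
\item For erasure, $E$ becomes $F$ under $x\mapsto\eps$. Extend $h$ by $h(x)=\eps$; this solves $E$ (if $x$ does not occur in $F$, $h$ is free on it).
\item For cancellation, $h(sU)=h(s)h(U)=h(s)h(V)=h(sV)$.
\item For extension of $x$ by $y$, define $h'(x)=h(y)h(x)$ and $h'=h$ elsewhere. Then $h'=h\circ\theta$, so $h'(xU)=h(y)\,h(x\theta(U))$ and $h'(yV)=h(y)\,h(\theta(V))$. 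These agree because $h$ solves $F=(x\theta(U),\theta(V))$.
\end{itemize}
The base case is that any assignment solves $(\eps,\eps)$. Quadraticity is not needed here.

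\emph{Completeness.} Fix a solution $h$ of $E$ and induct on the measure $\mu(E,h)=\sum_{x\in\operatorname{Var}(E)}|h(x)|\cdot(|U|_x+|V|_x)+|E|$, which compares $E$ and $F$ by the number of occurrences in each. If $E=(\eps,\eps)$ we are done. Otherwise I would do a case analysis on the leading symbols.
\begin{itemize}
\item One side is empty. Then the other side's image is empty, so its leading symbol is a variable $x$ with $h(x)=\eps$. Erase $x$: the solution restricts, and $\mu$ drops because $|E|$ drops.
\item Both sides start with the same symbol. Cancel; $|E|$ drops and the $h$-weighted part does not increase.
\item Both sides start with a variable $x$ and with $h(x)=\eps$. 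Erase $x$, as before.
\item Leading symbols $x$ (a variable) and $y$ differ, and $h(x)\ne\eps$. If also $h(y)$ is nonempty when $y$ is a variable, choose whichever of $x,y$ has the longer image; say $h(x)=h(y)w$. Extend $x$ and set $h'(x)=w$, $h'=h$ elsewhere. Then $h'\circ\theta=h$ on $x$, so $h'$ solves $F$ after the leading $h(y)$ is cancelled.
\end{itemize}
Distinct leading constants cannot occur, since $h$ is a solution.

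The main obstacle is showing that $\mu$ strictly decreases under extension. This is where quadraticity enters. After the step, $x$ keeps at most two occurrences, each now weighted by $|h(y)|$ less. The only possible gain comes from new occurrences of $y$: the substitution inserts a $y$ at each other occurrence of $x$, which is at most one occurrence since $x$ occurs at most twice in total. The leading $y$ has been cancelled. So the change in the weighted sum is at most $-|h(y)|\cdot(\text{occurrences of }x)+|h(y)|\le0$. Meanwhile $|F|\le|E|$ by Lemma~\ref{lem:quadratic-size}.

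Strictness needs care when $h(y)=\eps$ and $y$ is a variable. In that case I would first erase $y$ rather than extend. When $|h(y)|>0$, the $x$-loss is at least $2|h(y)|$ if $x$ occurs twice, against a $y$-gain of $|h(y)|$, so the decrease is strict. If $x$ occurs once, no $y$ is inserted and the loss is $|h(y)|>0$.

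Since $\mu$ is a nonnegative integer that strictly decreases along the constructed path, the path terminates, and the only terminal situation compatible with $h$ is $(\eps,\eps)$.
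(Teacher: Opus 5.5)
Your proof is correct, but it takes a different route from the paper, because the paper gives no argument of its own here. It cites \cite[Proposition~1]{lin_majumdar} and adds only the remark that an equation with one empty side is satisfiable exactly when the other side consists of variables, which can be erased one at a time. You instead supply the standard self-contained argument. For soundness you pull a solution back along $h\circ\theta$. For completeness you let a fixed solution dictate the rule and track a decreasing measure. The citation buys brevity. Your version checks the lemma directly for the paper's exact rule set, including the explicit empty-side erasure and the convention that the leading cancellation belongs to the extension step.

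Three small points:
\begin{itemize}
\item Your extension accounting ignores that the original leading $y$ on the other side also disappears. With $c_x$ the number of occurrences of $x$, the exact change of the weighted sum is $-c_x|h(y)|+(c_x-2)|h(y)|=-2|h(y)|$ when $y$ is a variable. Your bound is valid but loose.
\item Relatedly, quadraticity is not essential for completeness. It is needed only because your $\mu$ contains $|E|$, which can grow when you extend a variable occurring many times. The lexicographic measure $(|h(U)|,|E|)$ strictly decreases along your solution-guided steps for arbitrary equations. What quadraticity really provides is finiteness of $G(E)$ (Lemma~\ref{lem:quadratic-size}).
\item Your third case should say that \emph{some} side starts with a variable $x$ with $h(x)=\eps$, not both sides, as your later remark about erasing $y$ indicates.
\end{itemize}
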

The empty-side convention above makes explicit the terminal case of the
procedure: an equation with one empty side is satisfiable exactly when
the other side contains only variables, which can be erased successively.

\section{Proof of the main theorem}\label{sec:bound}

\begin{figure}[t]
\centering
\begin{tikzpicture}[
  >=Latex,
  block/.style={draw,rounded corners,minimum width=35mm,minimum height=12mm,align=center,font=\small},
  lab/.style={font=\small,align=center}
]
\node[block] (S0) at (0,0) {$E\;\to^*\;H_1$\\shared extensions};
\node[block] (S1) at (5.1,0) {$H_2\;\to^*\;H_3$\\shared extensions};
\node[block] (S2) at (10.2,0) {$H_4\;\to^*\;F$\\shared extensions};
\draw[->] (S0) -- node[above,lab] {exceptional\\operation} (S1);
\draw[->] (S1) -- node[above,lab] {exceptional\\operation} (S2);
\node[block] (R0) at (0,-2) {$\phi_0(E)\;\to^*\;\phi_0(H_1)$\\length-preserving path};
\node[block] (R1) at (5.1,-2) {$\phi_1(H_2)\;\to^*\;\phi_1(H_3)$\\length-preserving path};
\node[block] (R2) at (10.2,-2) {$\phi_2(H_4)\;\to^*\;\phi_2(F)$\\length-preserving path};
\draw[<->,dashed] (S0) -- node[right,lab] {$\phi_0$} (R0);
\draw[<->,dashed] (S1) -- node[right,lab] {$\phi_1$} (R1);
\draw[<->,dashed] (S2) -- node[right,lab] {$\phi_2$} (R2);
\end{tikzpicture}
\caption{A path with two exceptional operations and three segments of
shared extensions. In general, $r\le2N$ exceptional operations separate
$r+1$ segments, which may be empty. Each segment has its own replacement
map $\phi_i$ and corresponds to a length-preserving path for a regular
equation. Shortening those paths and mapping them back preserves the
segment endpoints, giving a total length of $O(N^{12})$.}
\label{fig:proof-overview}
\end{figure}

\subsection{Proof outline}\label{sec:proof-outline}
Fix an initial quadratic equation $E=(U,V)$ with linear right side, and
write $N=|E|$. We first establish that right linearity is preserved, so
that the same occurrence restrictions apply throughout a Nielsen path.

\begin{lemma}\label{lem:right-linear}
If $H$ is quadratic with linear right side and $H\to H'$, then $H'$ is
quadratic with linear right side and $|H'|\le |H|$.
\end{lemma}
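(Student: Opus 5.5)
Quadraticity and $|H'|\le|H|$ follow at once from Lemma~\ref{lem:quadratic-size}, so the only new claim is that the right side of $H'$ stays linear. The plan is a case analysis over the three rules of Section~\ref{subsec:nielsen}. Throughout, write $H=(U,V)$ with $V$ linear and $|U|_z+|V|_z\le2$ for every variable $z$. The guiding observation is that the occurrence count of a symbol in $V$ changes only where a substitution inserts a new occurrence.

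\textbf{Erasure and cancellation.} Erasure $x\mapsto\eps$ deletes occurrences and adds none. Hence every variable count in the new right side is at most the old one, which is at most $1$. Cancellation removes one leading symbol from each side, so $V'$ is a suffix of $V$ and is again linear.

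\textbf{Prefix extension.} Let $\theta:x\mapsto yx$ and write $V'$ for the new right side. Occurrences of variables other than $y$ and $x$ are unchanged in number. The count of $x$ is unchanged as well, since each $x$ is replaced by $yx$; the one leading $y$ that is cancelled does not affect $x$. So only $y$ can gain occurrences, and only when $y$ is a variable. A variable $y$ gains one occurrence in $V'$ for each occurrence of $x$ in the right side, apart from the possible cancelled leading one. There are two subcases.
\begin{enumerate}
\item[(a)] Suppose $x$ leads the left side and $y$ leads the right side, so $H=(xU_1,yV_1)$ and $H'=(x\theta(U_1),\theta(V_1))$. The leading $y$ of the right side is removed. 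Since $V$ is linear, $y$ does not occur in $V_1$, and each $x$ in $V_1$ contributes a new $y$. Because $H$ is quadratic and $x$ already occurs at the head of $U$, the variable $x$ occurs at most once more in $UV$ overall, so $|V_1|_x\le1$. The result is $|V'|_y\le1$.
\item[(b)] Suppose $x$ leads the right side and $y$ leads the left side, so $H=(yU_1,xV_1)$ and $H'=(\theta(U_1),x\theta(V_1))$. Here $x$ occurs in $V$ at the head, so by linearity $x\notin V_1$. Then $\theta(V_1)=V_1$ as far as $y$ is concerned, and $|V'|_y=|V_1|_y\le|V|_y\le1$.
\end{enumerate}
In both subcases, quadraticity of $H$ gives the total count $\le2$ as well, consistent with Lemma~\ref{lem:quadratic-size}.

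The only delicate point is subcase (a). There, linearity of $V$ alone is not enough, and one must combine it with quadraticity to bound $|V_1|_x$. The degenerate situations need a brief check: if $y$ is a constant, nothing is added to the variable counts; if $y=x$, the leading symbols are not distinct, so no extension applies. The empty-side cases admit only erasure, which is already covered.
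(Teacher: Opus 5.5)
Your proof is correct and matches the paper's argument: erasure and cancellation only delete symbols, and for an extension $x\mapsto yx$ the number $c_R(y)$ of occurrences of $y$ on the right changes by $c_R(x)-1\le 0$, which the paper uses to treat your subcases (a) and (b) together. One correction: your remark that subcase (a) needs quadraticity is wrong, because linearity of $V=yV_1$ with $x\ne y$ already gives $|V_1|_x\le 1$; the route through quadraticity is valid but not needed.
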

We prove Lemma~\ref{lem:right-linear} in Subsection~\ref{sec:preservation}.

Recall that a variable is \emph{shared} if it occurs once on each side.
We divide the elementary operations into two types: prefix extensions of
shared variables, called \emph{shared extensions}, and all remaining
operations, which we call \emph{exceptional operations}. The latter include prefix extensions of nonshared variables,
erasures, and cancellations. Lemma~\ref{lem:right-linear} ensures that
this classification applies to every reachable equation.

The next two lemmas control the two types separately. First, we bound
the total number of operations other than shared extensions.
\begin{lemma}\label{lem:potential}
Any path starting at $E$ contains at most $2N$ operations other than
shared extensions.
\end{lemma}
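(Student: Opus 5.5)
The plan is to use the nonnegative potential
\[
 P(H)=|U'|+2|V'| \qquad\text{for } H=(U',V').
\]
I will show that shared extensions leave $P$ unchanged and that every other operation lowers $P$ by at least one. Since $P(E)=|U|+2|V|\le 2N$ and $P\ge 0$, any path starting at $E$ can then contain at most $2N$ non-shared operations. By Lemma~\ref{lem:right-linear}, every equation on such a path is quadratic with a linear right side. So it suffices to check the effect of a single rule on an arbitrary such equation $H=(U',V')$.

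\textbf{Erasure and cancellation.} An erasure deletes every occurrence of the erased variable, and there is at least one. Each deleted occurrence lies on the left or the right, so $P$ drops by at least $1$. A cancellation deletes one symbol from each side, so $P$ drops by exactly $3$.

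\textbf{Extensions.} Let $x$ be the extended variable and $y\neq x$ the leading symbol of the other side. The substitution $\theta:x\mapsto yx$ inserts one copy of $y$ before each occurrence of $x$, and then one leading $y$ is cancelled from each side. Hence the length of a side changes by (number of occurrences of $x$ on that side) minus $1$, and only on the side led by $y$ does the cancellation actually remove a symbol. More concretely, suppose first that $x$ leads the left side.
\begin{itemize}
\item If $x$ is shared, the left side keeps its length: the leading $x$ is kept and nothing is inserted, since $U'$ has no further $x$. The right side gains one $y$ and loses its leading $y$. So $P$ is unchanged.
\item If $x$ occurs only once in total, the left side keeps its length and the right side loses one symbol. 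So $P$ drops by $2$.
\item If $x$ occurs twice on the left, the left side gains one symbol and the right side loses one. So $P$ changes by $+1-2=-1$.
\end{itemize}
Now suppose $x$ leads the right side. By right linearity, $x$ occurs at most once there.
\begin{itemize}
\item If $x$ is shared, the right side is unchanged in length. The left side gains a $y$ before its $x$ and loses its leading $y$, so $P$ is unchanged.
\item Otherwise $x$ occurs nowhere else. The right side keeps its length and the left side loses one symbol, so $P$ drops by $1$.
\end{itemize}

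The only step needing care is this case analysis for extensions. One must use that a variable occurring twice occurs twice on the left, never on the right (right linearity, preserved by Lemma~\ref{lem:right-linear}). This is what guarantees that the single unfavourable case, a left-doubled variable gaining a symbol, is outweighed by the factor $2$ on the right side. With that, each non-shared operation decreases the nonnegative integer $P$, and $P(E)\le 2N$, so the bound follows.
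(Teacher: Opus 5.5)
Your proof is correct and follows essentially the same route as the paper: the same potential $P=|L|+2|R|$, the observation that erasure and cancellation strictly decrease it, and an extension analysis that, split by which side $x$ leads, reproduces exactly the paper's four cases $(c_L(x),c_R(x))\in\{(1,1),(2,0),(1,0),(0,1)\}$ with potential changes $0,-1,-2,-1$. As in the paper, right linearity (Lemma~\ref{lem:right-linear}) is what rules out a variable doubled on the right, so the $(2,0)$ case is the only one in which a side grows, and the weight $2$ on the right side outweighs that growth.
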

We prove Lemma~\ref{lem:potential} in Subsection~\ref{sec:potential}.

Shared extensions need not be few, but a segment consisting only of such
operations can be replaced by a short segment with the same endpoints.
\begin{lemma}\label{lem:shared-shortening}
For every quadratic equation $H$ with
linear right side and every $K$ reachable from $H$ using only shared
extensions, there is a path of shared extensions from $H$ to $K$ of
length at most $C_0|H|^{11}$, for a constant $C_0$ independent of $H$
and $K$.
\end{lemma}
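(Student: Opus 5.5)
We would prove Lemma~\ref{lem:shared-shortening} by reducing to the regular case through the replacement map sketched in the introduction. Fix $H=(U,V)$ quadratic with linear right side. Let $D$ be the set of variables of $H$ that are not shared. By quadraticity and right linearity, each variable of $D$ either occurs twice in $U$, or occurs once in $U$ and not in $V$, or occurs once in $V$ and not in $U$. Choose fresh pairwise distinct constants $c_x$ for $x\in D$, not in $\Sigma$. Let $\phi$ be the substitution $x\mapsto c_x$ for $x\in D$, fixing all other symbols. Then $\phi(H)$ is regular: each remaining variable is shared, so it occurs once on each side. Moreover $|\phi(H)|=|H|$.

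The first step is an invariance claim. A shared extension never changes the set of shared variables, nor the multiset of symbols on each side. Extending a shared $x$ against a leading symbol $y$ replaces each occurrence of $x$ by $yx$ and cancels one $y$ from each side. Since $x$ occurs once per side, each side gains one $y$ and loses one $y$. So every $K$ reachable from $H$ by shared extensions has the same variables in $D$, occurring on the same sides, and $|K|=|H|$. Hence $\phi$ is defined on all such $K$, and $\phi(K)$ is regular of size $|H|$.

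Next we show that $\phi$ is injective on this set and is an isomorphism between two graphs. The first graph has the equations reachable from $H$ by shared extensions, with shared-extension edges. The second is the length-preserving reachability graph of $\phi(H)$. Injectivity holds because the $c_x$ are fresh, so $K$ is recovered from $\phi(K)$ by substituting back. For the edges, take $K$ with leading symbols $s,t$. A shared extension of $x$ applies to $K$ if and only if the corresponding extension of $x$ applies to $\phi(K)$, where the leading symbol $t$ becomes $\phi(t)$. The results correspond: $\phi(K\theta)=\phi(K)\theta'$ with $\theta':x\mapsto\phi(t)x$, because $\phi$ commutes with the substitution and with cancellation.

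Conversely, we must check that the length-preserving edges out of $\phi(K)$ are exactly these extensions. Erasure and cancellation strictly shrink the equation. Every variable of $\phi(K)$ is shared, so every prefix extension is a shared one, and it preserves length. Extensions of $x$ against a constant $c_y$ correspond to extensions against the variable $y$ in $K$, which are still shared extensions of $x$. So by induction along paths, $\phi$ maps the reachable shared-extension graph of $H$ onto the length-preserving reachability graph of $\phi(H)$, and distances are preserved.

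Finally we invoke the Day--Manea bound. Day and Manea~\cite{day_manea} show that for a regular equation $R$, the length-preserving reachability graph from $R$ (the part inside one SCC of their decomposition) has diameter $O(|R|^{11})$, and this polynomial underlies their NP bound. Take a shortest path from $\phi(H)$ to $\phi(K)$ in that graph and pull it back via $\phi^{-1}$. This gives a shared-extension path from $H$ to $K$ of length at most $C_0|H|^{11}$.

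The main obstacle is this citation. We must confirm that the Day--Manea result really bounds distances in the length-preserving graph, possibly stated per SCC or for their "$\mathcal{G}^{\mathrm{lp}}$" graph, with exponent $11$. We must also check that their transformation rules, including counting the cancellation within an extension as one step, match ours up to a constant factor. If their statement only bounds paths to terminal equations, we would instead use their structural lemma that length-preserving components are strongly connected with polynomial diameter, again via the fact that shared extensions are reversible by rotations.
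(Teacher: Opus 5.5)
Your proposal is correct and follows the paper's route. You replace the non-shared variables by fresh constants, show that shared-extension paths from $H$ correspond edge-for-edge to length-preserving paths from the regular equation $\phi(H)$, and pull back the Day--Manea $O(n^{11})$ length-preserving bound taken from the proof of their Theorem~8.11. The only difference is cosmetic: you freeze all non-shared variables, whereas the paper freezes just those occurring twice on the left. This changes nothing, because a variable occurring once in $\phi(H)$ can only be extended by a length-decreasing step, and when it is the symbol a shared variable is extended by, it behaves exactly like a constant.
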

We prove Lemma~\ref{lem:shared-shortening} in Subsection~\ref{sec:freezing},
by replacing the variables occurring twice on the left by pairwise
distinct fresh constants. Lemma~\ref{lem:freezing} shows that paths
correspond under this replacement, and Lemma~\ref{lem:regular-distance}
provides the length bound for the resulting regular equation.
Figure~\ref{fig:proof-overview} illustrates this decomposition and the
replacement of each segment. Before proving the lemmas, we show how
they yield the main theorem.

\begin{proof}[Proof of Theorem~\ref{thm:main}]
Choose any path from $E$ to a reachable equation $F$.
Let $r$ be the number of operations on this path that are not shared
extensions. By Lemma~\ref{lem:potential}, $r\le 2N$.
Splitting the path at these operations gives $r+1$ intervening segments
of shared extensions; these segments may have length $0$.
By Lemma~\ref{lem:right-linear}, the initial equation of every segment is
quadratic with linear right side and has size at most $N$.
Lemma~\ref{lem:shared-shortening} therefore replaces each segment by one
of length at most $C_0N^{11}$ with the same endpoints.
The operations between segments remain valid because their endpoints
have not changed. Concatenating the replacement segments and these
operations gives a path from $E$ to $F$ of length at most
\[
 (r+1)C_0N^{11}+r
 \le (2N+1)C_0N^{11}+2N
 =O(N^{12}).
\]
This proves the required bound.
\end{proof}

\subsection{Preservation of right linearity}\label{sec:preservation}
We prove Lemma~\ref{lem:right-linear} by checking that each elementary operation preserves right linearity.

\begin{proof}[Proof of Lemma~\ref{lem:right-linear}]
Quadraticity and nonincrease of total size follow from
Lemma~\ref{lem:quadratic-size}. It remains to check right linearity.
Erasure and cancellation only remove symbols, so they preserve this
property. For a variable $x$ and a symbol $y$, consider a prefix 
extension $x\mapsto yx$, where $x\ne y$,
and let $c_R(x)$ and $c_R(y)$ denote the numbers of occurrences of $x$
and $y$ on the right, respectively.
Under this operation, $c_R(x)$ is unchanged, while $c_R(y)$ changes by
$c_R(x)-1$: the
substitution inserts $c_R(x)$ copies, and the ensuing cancellation removes
one. If $y$ is a variable, $c_R(y)$ does not increase, since $c_R(x)\le 1$. 
Every other variable keeps its number of occurrences on the right. 
Thus the right side remains linear. 
\end{proof}

\subsection{Bounding the remaining operations}\label{sec:potential}
We prove Lemma~\ref{lem:potential} by using a nonnegative potential to bound the number of operations other than shared extensions.

\begin{proof}[Proof of Lemma~\ref{lem:potential}]
For an equation $H=(L,R)$ with $E\to^*H$, define the potential
$P(H)=|L|+2|R|$.
For each variable $x$, let $c_L(x)$ and $c_R(x)$ denote its numbers of
occurrences in $L$ and $R$, respectively.
We show that shared extensions leave $P$ unchanged, whereas every other
operation decreases $P$ by at least one.

First, consider a prefix extension $x\mapsto yx$, where $x$ is a variable and
$y$ is a symbol distinct from $x$. The substitution inserts
one copy of $y$ before each occurrence of $x$, and the subsequent cancellation
removes one leading $y$ from each side. Thus the changes in the side
lengths are
\[
 (\Delta|L|,\Delta|R|)=(c_L(x)-1,c_R(x)-1).
\]
By Lemma~\ref{lem:right-linear}, the possible pairs $(c_L(x),c_R(x))$ are
$(1,1),(2,0),(1,0),(0,1)$, with potential changes $0,-1,-2,-1$,
respectively. Here, the pair $(1,1)$ corresponds to a shared extension,
while each of the other pairs yields a strict decrease in $P$.

Cancellation decreases $P$ by $3$, and erasure of a variable $x$ decreases
$P$ by $c_L(x)+2c_R(x)\ge 1$.
Thus, shared extensions leave $P$ unchanged, whereas every other operation
decreases $P$ by at least one. Since $P$ is nonnegative, any path
starting at $E$ contains at most $P(E)=|U|+2|V|\le 2N$ operations
other than shared extensions.

\end{proof}

\subsection{Shortening shared-extension segments}\label{sec:freezing}
We prove Lemma~\ref{lem:shared-shortening} using two auxiliary lemmas:
a correspondence with a regular Nielsen graph and a distance bound for
that graph.

Let $H=(L,R)$ be a quadratic equation with a linear right side.
Let $B$ be the set of variables occurring twice in $L$.
For each $b\in B$, introduce a constant $\alpha_b$ not in
$\Sigma\cup X$, so that it does not occur in $H$. These symbols are
pairwise distinct.
Let $\phi$ replace each $b\in B$ by $\alpha_b$ and
leave every other symbol unchanged. Apply $\phi$ to words symbol by
symbol and to equations on both sides. The equation $\phi(H)$ is regular
and satisfies $|\phi(H)|=|H|$.

\begin{lemma}\label{lem:freezing}
For every equation $K$ reachable from $H$ using only shared extensions
and every integer $\ell\ge 0$, the following are equivalent:
\begin{enumerate}
\item[(1)] There is a path of shared extensions from $H$ to $K$ of length $\ell$.
\item[(2)] There is a path of length-preserving operations from $\phi(H)$
to $\phi(K)$ of length $\ell$.
\end{enumerate}
\end{lemma}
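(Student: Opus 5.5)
The plan is to show that $\phi$ induces, step by step, a bijection between shared extensions leaving an equation $J$ reachable from $H$ by shared extensions, and length-preserving operations leaving $\phi(J)$, and that this bijection commutes with $\phi$ on the targets. The equivalence of (1) and (2) then follows by induction on $\ell$. Since $\phi$ is injective on words (the $\alpha_b$ are fresh and pairwise distinct), equality of $\phi$-images of equations whose occurrences of $B$ are fixed implies equality of the equations. This injectivity is what lets us read a path in the regular graph back as a path in the original graph.

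The first step is an invariant. Along shared extensions starting from $H$, every variable keeps its occurrence counts $(c_L,c_R)$, and the set $B$ stays the set of variables occurring twice on the left. The proof of Lemma~\ref{lem:potential} shows that a shared extension $x\mapsto yx$ changes $|L|_y$ by $c_L(x)-1=0$ and $|R|_y$ by $c_R(x)-1=0$. So every symbol keeps its count on each side, and the same $\phi$ serves for all such $J$. Consequently $\phi(J)$ is regular, has the same size as $J$, and has the same shared variables as $J$: variables outside $B$ are untouched by $\phi$, and those in $B$ have become constants.

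Next, I will classify the length-preserving operations from a regular equation $\phi(J)$. Cancellation and erasure strictly decrease size. A prefix extension of $x$ changes the size by $(c_L(x)-1)+(c_R(x)-1)$, which is $0$ exactly when $x$ is shared, since regularity gives $c_L,c_R\le1$. So length-preserving edges from $\phi(J)$ are precisely the extensions of shared variables. Now compare the leading symbols. A shared extension at $J$ has leading pair $(x,y)$ with $x$ shared and $y$ an arbitrary symbol, possibly in $B$. Then $\phi(J)$ has leading pair $(x,\phi(y))$ with $x$ still a shared variable, and the extension $x\mapsto\phi(y)x$ is available, or the symmetric case. Conversely, a length-preserving extension at $\phi(J)$ extends a shared variable $x\notin B$ by a leading symbol $z$ of the other side, and $z=\phi(y)$ for the unique $y$ at that position of $J$. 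The key identity is
\[
\phi\bigl(x\,\theta_{x\mapsto yx}(W)\bigr)=x\,\theta_{x\mapsto\phi(y)x}(\phi(W)),
\]
and it holds because $\phi$ fixes $x$ and is a symbol-wise morphism. Hence the target of the extension at $\phi(J)$ is $\phi$ of the target at $J$. The distinct-leading-symbol condition is also preserved, since $\phi$ is injective on symbols.

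Finally I will run the induction. For (1)$\Rightarrow$(2), map each step forward using the invariant. For (2)$\Rightarrow$(1), lift the path step by step from $\phi(H)$. Each intermediate vertex is $\phi(J_i)$ for a unique $J_i$ reached by shared extensions, and at the end injectivity gives $J_\ell=K$ from $\phi(J_\ell)=\phi(K)$. The main obstacle is the bookkeeping in the middle step. I must make sure that no length-preserving move in $\phi(J)$ exists without a counterpart in $J$; for instance, an extension \emph{of} a frozen variable cannot occur because $\alpha_b$ is a constant. I must also handle the case where the leading symbol $y$ is a frozen $b$: the extension $x\mapsto bx$ in $J$ keeps $b$ with two left occurrences, matching $x\mapsto\alpha_b x$. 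The count invariant from the first step is what guarantees this.
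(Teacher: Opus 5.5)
Your proposal is correct and follows essentially the same route as the paper: invariance of per-side symbol counts under shared extensions, commutation of $\phi$ with substitution and cancellation for (1)$\Rightarrow$(2), and the observation that length-preserving moves in a regular equation are exactly extensions of shared variables for (2)$\Rightarrow$(1). The only cosmetic difference is in the converse: you lift the path step by step and use injectivity of $\phi$ at the end, whereas the paper defines each $H_i$ directly by the inverse replacement $\alpha_b\mapsto b$.
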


\begin{proof}[Proof of Lemma~\ref{lem:freezing}]
We first show that shared extensions preserve the number of occurrences
of each symbol on each side. Consider a shared extension $x\mapsto yx$,
where $x$ occurs once on each side and $y$ is a symbol distinct from $x$.
The number of occurrences of $x$ is unchanged. On each side, one copy
of $y$ is inserted before $x$ and one leading $y$ is then removed, so
its number of occurrences is also unchanged. Every other symbol is
unaffected. Thus the number of occurrences of each symbol on each side
is preserved, and $B$ remains fixed throughout a path of shared extensions.

We first prove $(1)\Rightarrow(2)$.
Let $H=H_0\to H_1\to\cdots\to H_\ell=K$ be a path of shared
extensions. For each operation $H_i\to H_{i+1}$, write the prefix
extension as $x\mapsto yx$. Since $x$ is shared, $x\notin B$ and
$\phi(x)=x$. Moreover, $\phi$ maps distinct symbols occurring in these
equations to distinct symbols, so $x\ne\phi(y)$ and the leading-symbol
condition is preserved. Substitution and cancellation therefore give
\[
 \phi(H_i)\to\phi(H_{i+1})
\]
by the operation $x\mapsto\phi(y)x$. The variable $x$ remains shared,
so this operation preserves length. Applying $\phi$ to the entire path
gives a length-preserving path of length $\ell$ from $\phi(H)$ to $\phi(K)$.

Next, we prove $(2)\Rightarrow(1)$.
Let $\phi(H)=Q_0\to Q_1\to\cdots\to Q_\ell=\phi(K)$ be a path
of length-preserving operations. Each $Q_i$ is regular: in a regular
equation, a length-preserving operation must extend a shared variable,
since cancellation, erasure, and extension of a variable occurring only
once all decrease the size. Such an extension preserves the number of
occurrences of each symbol on each side, by the argument above, and
hence preserves regularity.
For each $i$, let $H_i$ be obtained from $Q_i$ by replacing each
$\alpha_b$ by $b$. None of the variables in $B$ occurs in $Q_0$, and
the operations introduce no new symbols, so none occurs in any $Q_i$.
Thus this replacement preserves distinctness of symbols. A variable
extended in $Q_i$ is not one of the constants $\alpha_b$ and remains
shared in $H_i$. Replacing a leading $\alpha_b$ by $b$, if necessary,
therefore leaves the corresponding prefix extension applicable, and
substitution followed by cancellation produces $H_{i+1}$.
Consequently, $H_0\to\cdots\to H_\ell$ is a path of shared extensions.
Since $H_0=H$ and $H_\ell=K$, it has the required endpoints and length.
\end{proof}

We use the following distance bound from the proof of the regular
Nielsen graph diameter theorem of Day and Manea.
\begin{lemma}[{\cite[proof of Theorem~8.11]{day_manea}}]
\label{lem:regular-distance}
Let $Q$ be a regular word equation, and let $n=|Q|$.
For every equation $Q'$ reachable from $Q$ using only length-preserving
operations, there is a path of length-preserving operations from $Q$ to
$Q'$ of length at most $C_1n^{11}$, for a constant $C_1$ independent of
$Q$, $Q'$, and the constant alphabet.
\end{lemma}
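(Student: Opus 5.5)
The plan is not to reprove the regular diameter bound from scratch. Instead I would extract it from the argument of Day and Manea~\cite{day_manea}, checking that the quantity they bound is exactly the one in the statement. Their proof of the diameter theorem (Theorem~8.11) proceeds by decomposing the Nielsen graph of a regular equation into its length-preserving parts, and then bounding separately the number of length-decreasing steps and the length of the connecting length-preserving segments. I would first isolate from their proof the statement that, for a regular $Q$ with $|Q|=n$, any two equations in the same length-preserving reachability class are joined by a length-preserving path of length polynomial in $n$, with exponent $11$. I would then check that this is the piece later multiplied by the $O(n)$ count of length-decreasing steps to obtain their overall bound. This layered structure is precisely the one mirrored in our proof of Theorem~\ref{thm:main}, so the segment bound should be available as an intermediate claim there.

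Second, I would reconcile conventions, which I expect to be the main obstacle. The points to check are the following.
\begin{itemize}
\item We count each elementary prefix extension, including its built-in cancellation, as one edge. We also do not quotient by variable renaming or by exchange of sides. If Day and Manea measure paths in a coarser way (for instance, combining a substitution with free cancellation of a common prefix, or working up to renaming), then each of their steps must be simulated by at most a constant, or at worst $O(n)$, of our steps.
\item An $O(n)$ overhead per step would spoil the exponent. So I would verify directly that, in a regular equation, a length-preserving step is exactly an extension of a shared variable. This was already observed in the proof of Lemma~\ref{lem:freezing}: cancellation, erasure, and extension of a nonshared variable all strictly decrease the size. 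Their length-preserving moves therefore correspond one-to-one with ours.
\item I would check that vertices are compared as ordered pairs of words on the nose, so that reachability of the specific $Q'$, rather than of an isomorphic copy, is what is bounded.
\end{itemize}

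Third, I would justify independence from the constant alphabet. Along length-preserving paths, the multiset of symbols on each side is invariant. The same invariance argument as in Lemma~\ref{lem:freezing} applies, since each step inserts one copy of $y$ and cancels one leading $y$. Moreover, whether a rule applies, and what it produces, depends only on equality between symbols, never on their identity. Hence any injective renaming of the at most $n$ constants occurring in $Q$ induces an isomorphism of length-preserving reachability graphs. It therefore suffices to bound the distance over an alphabet of size at most $n$. Consequently, if the cited bound depends on $|\Sigma|$ only through a quantity at most $n$, the constant $C_1$ can be taken uniform. If their bound is stated only for a fixed alphabet, this renaming argument is what supplies the uniformity. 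Combining the three steps yields a length-preserving path from $Q$ to $Q'$ of length at most $C_1n^{11}$.
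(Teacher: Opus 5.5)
Your proposal follows the same route as the paper: the lemma is read off from inside Day and Manea's proof of Theorem~8.11, after checking that their length-preserving moves are exactly the single-step shared extensions used here and that the bound depends only on the equation size. The paper locates the exponent more concretely, as an $O(n^{10})$ length-preserving diameter for basic regular equations times a further factor $n$ from their Theorem~4.8 and Remark~4.6. One small caution: your renaming argument only reduces to alphabets of size at most $n$, so it cannot by itself make $C_1$ uniform if the cited constant depended on the alphabet; uniformity rests on Day and Manea's bound being alphabet-independent, which is what the paper asserts.
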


This is the bound for length-preserving paths obtained within the proof
of \cite[Theorem~8.11]{day_manea}, rather than the exponent-$12$ bound
stated there for the full graph.
More explicitly, the proof bounds the
length-preserving diameter of basic regular equations by $O(n^{10})$,
and Theorem~4.8 together with Remark~4.6 incurs at most a further factor
of $n$ when passing to arbitrary regular equations.
The length-preserving operations in
\cite[Section~3.1]{day_manea} are the same prefix extensions of shared
variables used here, with the leading cancellation included in one step.
The bound is in terms of the equation size and applies to arbitrary
finite constant alphabets; in particular, it applies to the regular
equation $\phi(H)$ over the enlarged alphabet
$\Sigma\cup\{\alpha_b:b\in B\}$.

\begin{proof}[Proof of Lemma~\ref{lem:shared-shortening}]
Let $K$ be reachable from $H$ using only shared extensions.
By Lemma~\ref{lem:freezing}, $\phi(K)$ is reachable from $\phi(H)$
using only length-preserving operations. Since $\phi(H)$ is regular,
Lemma~\ref{lem:regular-distance} gives such a path of length at most
$C_1|\phi(H)|^{11}$. By the implication $(2)\Rightarrow(1)$ in
Lemma~\ref{lem:freezing}, there is a path of shared extensions from
$H$ to $K$ of the same length. As $|\phi(H)|=|H|$, its length is at most $C_1|H|^{11}$.
Taking $C_0=C_1$ proves the lemma.
\end{proof}

\section{Structure of the Nielsen graph}\label{sec:scc-decomposition}

The correspondence in Lemma~\ref{lem:freezing} also yields a structural
decomposition of the full Nielsen graph. We first record the reversibility of shared
extensions, using the rotation argument underlying
\cite[Remark~3.2]{day_manea} for regular equations.

\begin{lemma}\label{lem:shared-reversible}
If $H\to K$ is a shared extension, then $K$ reaches $H$ using only
shared extensions.
\end{lemma}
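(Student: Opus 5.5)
We prove the statement by the rotation argument: we show that after finitely many further extensions of the same shared variable, $K$ returns to $H$. Write the shared extension as $x\mapsto yx$, with $x$ at the head of one side and $y\ne x$ at the head of the other. By the symmetric rule we may assume $H=(xL',yR')$ with $x$ occurring exactly once in $R'$, or the mirrored situation, which is handled identically. Since $x$ occurs once in $R'$, write $R'=R_1xR_2$, where $R_1$ contains no $x$. Then
\[
 K=(x\,\theta(L'),\,R_1yxR_2),\qquad \theta:x\mapsto yx.
\]

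Our plan is to repeatedly extend $x$ against the leading symbol of the right side of $K$. Each such step is a shared extension: $x$ stays shared, since shared extensions preserve the occurrence counts of every symbol on each side, as in the first paragraph of the proof of Lemma~\ref{lem:freezing}. The applicability condition holds as long as the right side does not begin with $x$, and it begins with $x$ only after the block in front of $x$ has been exhausted.

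Concretely, let $R_1yx R_2 = z_1\cdots z_m\,x\,R_2$, where $z_1\cdots z_m=R_1y$. We extend $x$ by $z_1$, then by $z_2$, and so on up to $z_m$. On the right side this peels off $z_1,\dots,z_m$, rotating them to just before $x$; since these symbols are deleted at the head and reinserted before $x$, the right side cycles. On the left side, the left occurrence of $x$ (or, if $x$ also heads the left side, the leading $x$ there) receives the successive insertions.

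The main bookkeeping task is to verify that the composite substitution of a full rotation cycle acts as the identity on the equation up to returning to $H$. The cleanest approach is to track the word $w$ with $x\mapsto wx$. After $K$, the cumulative prefix attached to $x$ is $y$. Extending through $R_1y$ and then continuing around the whole right cycle once more gives a cumulative prefix that is a conjugate power of $R'$'s block. We would instead choose the number of further rotations so that the total cumulative substitution becomes $x\mapsto ux$ with the right side restored to $yR'$. The left side must then equal $xL'$; this follows because both sides are determined by the right side together with the occurrence pattern.

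We expect this to be the main obstacle: left occurrences of $x$ may be preceded by symbols that are themselves rotated. The fix we would try first is to think of $x$ as a ``cursor'' shared between two cyclic words. A shared extension by $z$ moves the cursor past $z$ cyclically on each side simultaneously. The left side's head is $x$ itself, so the left word, read cyclically starting from $x$, is invariant except that $x$'s position relative to the rest of the left word is fixed (the head stays $x$). The right word, read cyclically, is invariant up to the position of $x$. After $|R|$ rotations the right side returns exactly to $yR'$. The left side is then $x\,\sigma(L')$, where $\sigma$ prepends to $x$ the full cyclic right word. Since the left side has length $|L|$ and is preserved in length, $x$ occurs at most once in $L'$.

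If $x$ does not occur in $L'$, the left side returns to $xL'$ immediately. Otherwise the left occurrence of $x$ inside $L'$ has been shifted, and we perform further full rotations. Because the vertex set reachable by shared extensions is finite (Lemma~\ref{lem:quadratic-size}) and each full right rotation acts as a permutation on this finite set, some power of it is the identity. A power of the deterministic rotation map that returns $K$'s forward orbit to $H$ then gives the required path. Injectivity of one extension step, which holds because a shared extension is determined by its result together with the knowledge of $x$ and $y$, ensures that the orbit of $H$ is a cycle containing $K$. This injectivity argument is what finally closes the proof.
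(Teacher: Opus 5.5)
You are missing the one observation that makes this lemma immediate, and several claims built on the opposite assumption are false. Because $x$ is shared, it occurs exactly once on the left, and in $H=(xL',yR')$ that occurrence is the leading one. Hence $x\notin L'$ and $\theta(L')=L'$, so every extension of $x$ against the right head leaves the left side literally equal to $xL'$. Your ``main obstacle'' therefore does not exist, and the case ``the left occurrence of $x$ inside $L'$ has been shifted'' is vacuous. The inference ``$x$ occurs at most once in $L'$ since length is preserved'' is also not valid; the correct count, zero, comes from sharedness.

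On the right, only the block $z_1\cdots z_m=R_1y$ in front of $x$ rotates, while $R_2$ never moves. So the right side is not a cyclic word, and the period is $m=|R_1|+1$, not $|R|$. For instance, with constants $a,b,c,d$ and $H=(xa,bcdx)$, the orbit has period $3$ while $|R|=4$. Thus $|R|$ rotations, starting from either $H$ or $K$, do not return to $H$. Similarly, extending by $z_1,\dots,z_m$ brings you back to $K$; you should stop after $z_{m-1}$.

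Your closing finiteness argument is the only part that actually carries the proof. It is repairable but incomplete as stated. An injective self-map of the finite forward orbit of $H$ is a permutation, so $H$ lies on a cycle through $K$. But you must check two things.
\begin{enumerate}
\item[(i)] The deterministic step must be defined along the whole orbit. The left side keeps its leading $x$, and the right side never begins with $x$, because the block before $x$ is rotated and never ``exhausted'' as your text suggests. If the process could halt, the permutation argument would collapse.
\item[(ii)] The map must be injective. Your justification (``determined by its result together with $x$ and $y$'') does not give this, since $y$ varies along the orbit. You need that $y$ can be read off the result as the symbol immediately preceding the right occurrence of $x$.
\end{enumerate}

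Once you note $x\notin L'$, none of this machinery is needed. The paper writes $H=(xA,s_1\cdots s_mxB)$ with $m\ge1$ and computes $K=(xA,s_2\cdots s_ms_1xB)$. It then observes directly that $m-1$ further extensions of $x$ return to $H$.
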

\begin{proof}
Let $x$ be the extended variable. By symmetry, write
\[
 H=(xA,s_1\cdots s_m xB),\qquad m\ge1.
\]
Since $x$ is shared, none of $A$, $B$, and $s_1\cdots s_m$ contains $x$.
The extension $x\mapsto s_1x$ gives
\[
 K=(xA,s_2\cdots s_ms_1xB).
\]
Thus the left side stays fixed and the nonempty prefix before $x$ on
the right rotates one position to the left. Every symbol of this prefix
is distinct from $x$, so the same type of shared extension remains
applicable. After $m$ extensions the equation is again $H$; hence
$m-1$ further extensions suffice from $K$. If $m=1$, the edge is a
self-loop and no further operation is needed.
\end{proof}

Fix a quadratic equation $E$ with linear right side. For each
$H\in G(E)$, choose the replacement $\phi$ of
Subsection~\ref{sec:freezing} using the variables repeated on the left
side of $H$, and keep this choice fixed on its shared-extension graph.
Let $G_{\mathrm{sh}}(H)$ be the graph
obtained by allowing only shared extensions from $H$, and let
$G_{\mathrm{lp}}(\phi(H))$ be the length-preserving reachability graph
of $\phi(H)$. A \emph{strongly connected component} (SCC) is a maximal
set of mutually reachable vertices. The \emph{condensation graph}
is obtained by contracting each SCC to one vertex and discarding
all edges internal to a component. Its depth is the maximum number
of edges in a directed path.

\begin{theorem}[SCC decomposition]\label{thm:scc-decomposition}
For a quadratic equation $E=(U,V)$ with linear right side,
the following hold.
\begin{enumerate}
\item[(1)] \textbf{Identification of SCCs.}
The SCCs of $G(E)$ are exactly its shared-extension reachability classes.
\item[(2)] \textbf{Correspondence with regular graphs.}
For every $H\in G(E)$, the replacement $\phi$ from
Subsection~\ref{sec:freezing} induces an isomorphism of directed graphs
\[
 G_{\mathrm{sh}}(H)\cong G_{\mathrm{lp}}(\phi(H)).
\]
\item[(3)] \textbf{Condensation depth.}
The condensation graph of $G(E)$ has depth at most
$|U|+2|V|\le2|E|$.
\end{enumerate}
\end{theorem}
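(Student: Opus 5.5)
The plan is to combine the potential $P(H)=|L|+2|R|$ from the proof of Lemma~\ref{lem:potential} with the reversibility of Lemma~\ref{lem:shared-reversible} and the correspondence of Lemma~\ref{lem:freezing}. The potential argument applies verbatim to every $H\in G(E)$, since it only uses Lemma~\ref{lem:right-linear}. It shows that every shared extension leaves $P$ unchanged, and every other edge strictly decreases $P$.

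\textbf{Part (1).} First, if $H$ reaches $K$ by shared extensions, then by Lemma~\ref{lem:shared-reversible} applied edge by edge (and induction along the path), $K$ reaches $H$ by shared extensions. So the shared-extension reachability relation is symmetric, hence an equivalence relation, and each class lies inside one SCC. Conversely, let $H,K$ be mutually reachable in $G(E)$. Then $P$ is nonincreasing along any path, so a closed walk $H\to^*K\to^*H$ has constant $P$. Every edge on it must therefore be a shared extension, since any exceptional edge would strictly decrease $P$ and it could never recover. Hence $K$ is reached from $H$ by shared extensions only, and SCCs coincide with shared-extension classes.

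\textbf{Part (2).} Fix $H$. By the first paragraph of the proof of Lemma~\ref{lem:freezing}, the set $B$ is constant on $G_{\mathrm{sh}}(H)$, so the single map $\phi$ is well defined on all its vertices and is injective, since it is inverted by $\alpha_b\mapsto b$. On vertices, the implication $(1)\Rightarrow(2)$ of Lemma~\ref{lem:freezing} sends vertices of $G_{\mathrm{sh}}(H)$ into $G_{\mathrm{lp}}(\phi(H))$, and $(2)\Rightarrow(1)$ (the inverse replacement) shows surjectivity. For edges, I would apply the single-step argument inside that proof. A shared extension $K\to K'$ via $x\mapsto yx$ yields the edge $\phi(K)\to\phi(K')$ via $x\mapsto\phi(y)x$, and conversely. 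The point needing care is that the Nielsen graph may have several parallel edges or ambiguous labels, and self-loops. I expect this to be the main obstacle. Edges are rule applications, and a rule application at a vertex is determined by the choice of side whose leading variable is extended. Distinct leading symbols map to distinct leading symbols under $\phi$, and $\phi(y)$ is a variable iff $y$ is a shared variable. So the available shared extensions at $K$ biject with the length-preserving operations at $\phi(K)$, using the fact from the proof of Lemma~\ref{lem:freezing} that in a regular equation those are exactly the shared extensions. This gives a bijection on edges compatible with endpoints.

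\textbf{Part (3).} Along any directed path in the condensation graph, each edge between distinct SCCs is realized by some edge of $G(E)$ between different classes. By part (1), such an edge is not a shared extension, since shared extensions stay within a class. It therefore strictly decreases $P$. Because $P$ is constant on each SCC (all internal reachability uses shared extensions), $P$ strictly decreases along each condensation edge. Since $P$ is a nonnegative integer starting from $P(E)=|U|+2|V|$, the depth is at most $|U|+2|V|\le 2|E|$.
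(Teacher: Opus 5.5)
Your proposal is correct and follows the paper's proof essentially step by step: reversibility plus the potential $P$ for (1), the vertex and edge correspondence from the proof of Lemma~\ref{lem:freezing} for (2), and strict decrease of $P$ across condensation edges for (3), with your edge bookkeeping in (2) (parallel edges, self-loops) more explicit than the paper's one-line appeal. One small slip: the claim ``$\phi(y)$ is a variable iff $y$ is a shared variable'' fails for a variable occurring once in total, which $\phi$ fixes but which is not shared; what your edge bijection actually needs, and what does hold, is that $\phi$ preserves the set of shared variables, since it only changes variables of $B$ and none of them is shared.
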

\begin{proof}
\noindent(1) Shared extensions are reversible by
Lemma~\ref{lem:shared-reversible}. Every other edge strictly decreases
$P(L,R)=|L|+2|R|$, which no edge increases
(proof of Lemma~\ref{lem:potential}), so it cannot lie within an SCC.
Hence the SCCs are exactly the shared-extension reachability classes.

\noindent(2) The proof of Lemma~\ref{lem:freezing} gives a bijection
on reachable vertices and edges, with inverse obtained by
replacing each fresh constant $\alpha_b$ by $b$. This is the claimed
isomorphism.

\noindent(3) By (1), $P$ is constant on each SCC and decreases by at
least one along each condensation edge. Since $0\le P\le P(E)$,
every directed path in the condensation has at most
$P(E)=|U|+2|V|$ edges.
\end{proof}

\begin{corollary}[Distance preservation]\label{cor:scc-distance}
Under the assumptions of Theorem~\ref{thm:scc-decomposition},
for any $H\in G(E)$ and $K,J\in G_{\mathrm{sh}}(H)$,
\[
 d_{G(E)}(K,J)=d_{G_{\mathrm{sh}}(H)}(K,J)
             =d_{G_{\mathrm{lp}}(\phi(H))}(\phi(K),\phi(J)).
\]
In particular, the corresponding components have the same diameter,
number of vertices, and directed cycle structure.
\end{corollary}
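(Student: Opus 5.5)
The corollary has two equalities and a closing remark, and I will prove them in that order. The first equality compares distances in the full graph $G(E)$ with distances using only shared extensions. The second transfers those distances through $\phi$ via Lemma~\ref{lem:freezing}.

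\textbf{First equality.} I will show $d_{G(E)}(K,J)=d_{G_{\mathrm{sh}}(H)}(K,J)$. One direction is immediate: every path in $G_{\mathrm{sh}}(H)$ is a path in $G(E)$, so $d_{G(E)}\le d_{G_{\mathrm{sh}}(H)}$.

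For the converse, take any path in $G(E)$ from $K$ to $J$. By Theorem~\ref{thm:scc-decomposition}(1), $K$ and $J$ lie in the same SCC. Moreover, $K$ and $J$ have the same potential $P$, since shared extensions leave $P$ unchanged (proof of Lemma~\ref{lem:potential}). No edge increases $P$, and every non-shared edge strictly decreases it. Hence a path from $K$ to $J$ cannot contain a non-shared edge, since $P$ would then end strictly below $P(K)=P(J)$.

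So every path in $G(E)$ from $K$ to $J$ consists only of shared extensions. Its vertices are reachable from $H$ by shared extensions, because $K$ is reachable from $H$ by shared extensions. The path therefore lies in $G_{\mathrm{sh}}(H)$, which gives the reverse inequality. If no path exists, both distances are undefined together.

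\textbf{Second equality.} Here the main obstacle is that Lemma~\ref{lem:freezing} is stated only for paths starting at $H$, not at an arbitrary $K\in G_{\mathrm{sh}}(H)$. I see two routes.

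My first choice is to use the isomorphism $G_{\mathrm{sh}}(H)\cong G_{\mathrm{lp}}(\phi(H))$ of Theorem~\ref{thm:scc-decomposition}(2). It is a bijection on vertices and edges, and it is given by the fixed map $\phi$ with inverse $\alpha_b\mapsto b$. A directed graph isomorphism maps paths from $K$ to $J$ bijectively onto paths from $\phi(K)$ to $\phi(J)$ of the same length, so shortest distances agree.

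The point to verify is that $G_{\mathrm{lp}}(\phi(H))$ contains every length-preserving path from $\phi(K)$, and similarly that $G_{\mathrm{sh}}(H)$ is closed under shared extensions from $K$. Both hold by construction, since each graph is a reachability closure containing its start vertex's descendants.

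The alternative route is to rerun Lemma~\ref{lem:freezing} with $K$ in place of $H$. The set $B$ of variables repeated on the left is the same for $H$ and $K$, because occurrence counts per side are preserved along shared extensions, as shown in the proof of Lemma~\ref{lem:freezing}. Hence the replacement chosen for $K$ coincides with $\phi$, and the lemma applied at base $K$ gives equal path lengths for every $\ell$.

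\textbf{Final remark.} Equal diameters, vertex counts, and cycle structure follow directly from the isomorphism together with the distance equality. By (1), the SCC of $H$ is exactly the vertex set of $G_{\mathrm{sh}}(H)$. By Lemma~\ref{lem:shared-reversible}, the SCC is also closed under shared extensions, so the component is exactly $G_{\mathrm{sh}}(H)$.
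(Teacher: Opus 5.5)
Your proof is correct and follows the paper's route: the first equality holds because a path between two vertices of one SCC cannot use a non-shared edge (you make the paper's one-line ``cannot leave the SCC and return'' argument explicit via the potential $P$), and the second equality follows from the isomorphism of Theorem~\ref{thm:scc-decomposition}(2). Your alternative of rerunning Lemma~\ref{lem:freezing} at base $K$, with the same set $B$ and the same fresh constants, also works but is not needed.
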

\begin{proof}
A path between vertices of an SCC cannot leave that SCC and return.
The first equality follows, and the second follows from the graph
isomorphism of Theorem~\ref{thm:scc-decomposition}.
\end{proof}

For regular equations, the length-preserving SCCs already form a DAG
of linear depth~\cite[Section~3.4]{day_manea}. Theorem~\ref{thm:scc-decomposition}
extends this description to quadratic equations with a linear side:
each SCC is isomorphic to a regular length-preserving component, while
the potential controls the depth of the condensation. The replacement
is chosen separately for each component; it does not assert preservation
of solution sets or an isomorphism between the full Nielsen graphs.

\section{Conclusion and future work}\label{sec:scope}

We have shown that quadratic word equations with a linear side admit
polynomial-length Nielsen paths to every reachable equation. Together
with the characterization of satisfiability by acceptance and the known
NP-hardness for regular equations, this establishes NP-completeness for
this class.

We also established a structural description of the Nielsen graph.
Its SCCs are precisely the shared-extension reachability classes, and
each is isomorphic to the length-preserving reachability graph of a
regular equation obtained by replacing variables repeated on the left
by fresh constants. The isomorphism preserves all directed distances
within the component, together with its diameter, number of vertices,
and directed cycle structure. These components form a condensation DAG
of depth at most $|U|+2|V|$ for an initial equation $E=(U,V)$.
This extends the regular-case decomposition to the class studied here
and explains the polynomial path bound through two features: short
paths within components and only linearly many transitions between them.

The broader goal is to determine whether satisfiability of arbitrary
quadratic word equations belongs to NP. The present result settles an
intermediate class, but the general case remains open. A natural 
direction for future work is to investigate further subclasses of 
quadratic equations and to determine the extent to which their Nielsen 
graphs admit comparable structural bounds. Such results may provide 
further insight into the complexity of the general quadratic case.

\section*{AI Usage Disclosure}
The author used OpenAI Codex to assist with exploratory computations,
discussions of mathematical arguments, and the organization, drafting,
and editing of the manuscript. All mathematical claims, proofs, and references
were independently checked by the author, who takes full responsibility
for the content of this paper.

\bibliographystyle{plainurl}
\bibliography{references}
\end{document}